\newtheorem{proposition}{Proposition}
\newtheorem{lemma}{Lemma}
\newtheorem{theorem}{Theorem}
\newtheorem{remark}{Remark}
\def\rayleigh{\mathop{\mathrm{Rayleigh}}}
\def\exponential{\mathop{\mathrm{Exp}}}
\newenvironment{list4}{
  \begin{list}{$\bullet$}{
      \setlength{\itemsep}{0.05cm}
      \setlength{\labelsep}{0.2cm}
      \setlength{\labelwidth}{0.3cm}
      \setlength{\parsep}{0in}
      \setlength{\parskip}{0in}
      \setlength{\topsep}{0in}
      \setlength{\partopsep}{0in}
      \setlength{\leftmargin}{0.2in}}}
      {\end{list}}
\begin{document}
%
%
%
%
\title{\huge Power Allocation for Type-I ARQ Two-Hop Cooperative Networks for Ultra-Reliable Communication}
\author{\IEEEauthorblockN{Endrit Dosti, Themistoklis Charalambous and Risto Wichman}
	\IEEEauthorblockA{School of Electrical Engineering, Aalto University, Espoo, Finland \\ email: firstname.lastname@aalto.fi}}
\maketitle

%
%
%
%
\begin{abstract}		
We analyze the performance of amplify-and-forward (AF) automatic repeat request (ARQ) for a two-hop cooperative system with reliability constrains. For this setup, we first derive the closed-form outage probability expression. Next, we present a power allocation scheme that allows us to achieve a target outage probability, while minimizing the outage-weighted average power expenditure for asymmetric power allocation between the source and relay. This is cast as an optimization problem, and the optimal power allocation (OPA) is obtained in closed form by invoking the Karush-Kuhn-Tucker (KKT) conditions. We evaluate numerically the OPA strategy between different AF-ARQ transmission rounds and we show that the proposed scheme provides large power gains with respect to the optimized point-to-point ARQ scheme, as well as with respect to the equal power allocation (EPA) strategy.  
\end{abstract}
	

%
%
%
%
\section{Introduction}\label{sec:intro}

Ultra-Reliable Low-Latency Communication (URLLC) systems aim at guaranteeing successful data transmission within stringent delay requirements. Such characteristics are indispensable for various applications, such as, tele-surgery, intelligent transportation, and industry automation \cite{Popovski2017, Schulz}. Fading in wireless channels, which causes fluctuations in the received signal strength resulting in loss of transmitted packets \cite{Goldsmith}, is the main impediment towards achieving URLLC.

To alleviate the effects of this phenomenon, schemes that rely on \emph{packet retransmission} and \emph{cooperative communications} have been proposed in the literature. The problem of integrating retransmission and cooperative protocols, both jointly and separately, in traditional communications systems is a widely studied topic; see, for example, \cite{Lee, Makki, LTS, Tumula2, 5754756, Larsson_2, Larsson, Dosti2}. For instance, in \cite{Lee} the authors propose OPA schemes for maximizing throughput for cooperative decode-and-forward ARQ relaying schemes. While the cooperative scheme proposed in \cite{Lee} exhibits reasonably good performance, it requires that all relay nodes decode the information, which results in increased latency in the system. In \cite{Makki}, the authors analyze the performance of relay-ARQ networks under quasi-static and fast fading conditions. Therein, the authors show that OPA across different ARQ rounds provides limited gains with respect to the EPA strategy. However, their analysis is limited only in outage probabilities spanning the range $\epsilon \in [10^{-1}, 10^{-3}]$ and does not consider ultra-reliability constrains. In \cite{Tumula2}, the authors analyze and provide OPA strategy for Chase Combining (CC) Hybrid ARQ (HARQ). However, their analysis is limited to the case of maximum two transmissions. Similarly, in \cite{5754756}, the authors provide an OPA scheme for Incremental Redundancy (IR) HARQ limited only for maximum two transmissions. The analysis is then extended in \cite{Larsson_2}, where the authors propose an OPA strategy for the point-to-point IR HARQ scheme valid for any number of transmissions. However, integrating the scheme proposed therein with cooperative communications would require the implementation of complex coding schemes among all the cooperating nodes in the system. Other works, such as \cite{Larsson}, discuss the maximization of the throughput for multiple input multiple output (MIMO) ARQ systems, without considering the power allocation across different ARQ transmission rounds. However, in URLLC systems the goal is not necessarily the design of high throughput systems, but the design of robust and reliable systems, while maintaining reasonably low complexity algorithms at the devices, and thus low energy consumption. In this context, in \cite{Dosti} the authors propose closed form OPA scheme for ARQ protocol that enables communication with ultra-reliability constrains. Therein, the authors show that the proposed scheme maximizes the overall system throughput with or without the presence of feedback delay. In \cite{Dosti2}, the authors propose an OPA scheme for CC-HARQ protocol, which allows the exploitation of the coding gains across the collected packets at the receiver. As expected, this results in larger power savings. However, their analysis is limited only to point-to-point communications.
 
In this paper, we integrate the AF cooperative scheme in ARQ retransmission scheme in a two-hop single-relay network, in order to achieve the same performance in terms of outage probability, but with lower power expenditure while maintaining low complexity \cite{LTS}. More specifically, 
we develop an AF-ARQ relay scheme that enables communication in the wireless channel with minimum power expenditure while guaranteeing a target reliability level for both symmetric and asymmetric power allocation between source and relay. 
The contributions of this paper are as follows:
\begin{list4}
\item[1)] We obtain a closed form expression for the outage probability of AF-ARQ relay scheme under asymmetric power allocation between the source and the relay.
\item [2)] We obtain a closed form expression for the power allocation strategy that minimizes the  outage-weighted  average transmitted power.
\item [3)]  Through simulations, we show that the proposed scheme provides a better performance than the EPA strategy and different types of point-to-point communication schemes.
\end{list4} 

	
%
%
%
%
\section{System model}
\label{sec:system_model}

In this work, we consider a two-hop relay network consisting of a source $S$, a relay $R$ and a destination $D$. The source follows an ARQ protocol combined with AF relaying strategy. Each ARQ round consists of $1$ time slot of duration $T$, and this slot is divided into two phases: 
\begin{list4}
\item[1)] \label{(1)} In the \emph{first phase}, the source broadcasts the packet to the destination and relay. 
\item [2)]  In the \emph{second phase}, the relay amplifies the received signal and forwards it to the destination. If the packet is correctly recovered by the destination an acknowledgment packet (ACK) is fed back and the source carries on with transmitting the next packet. Otherwise, the destination sends a negative-acknowledgment packet (NACK) and a new ARQ round is initiated starting from the first phase.
\end{list4} 

The two phases described above are carried for a maximum number of $M$ ARQ rounds. If all these rounds are unsuccessful, then a failure to transmit the packet is declared, and the source proceeds to sending a new packet. Within one ARQ round, we consider the simplest orthogonal separation between terminals in two-phase time-division multiplexing (TDM). In the first $T/2$ of the slot the source broadcasts the message with power $P_s$ to both destination and the relay, whose respective received messages would be:
\begin{align}
y_{r} = \sqrt{P_s}h_{sr}x_{s} + n_r,  \label{eq:y_r}  \\
y_{d} = \sqrt{P_s}h_{sd}x_{s} + n_{d},  \label{eq:y_d}
\end{align}
where $y_r$, $y_{d}$ denote the received messages by the relay and the destination, respectively; $h_{sr}$, $h_{sd}$ are the source-relay and source-destination channel coefficients, which are statistically independent complex normal random variables with mean zero, and variance $\sigma_{ij}^2$, i.e, $h_{ij} \sim \mathcal{CN} (0, \sigma_{ij})$; the envelope of the channel coefficients is Rayleigh distributed, i.e., $|h_{ij}|\sim \rayleigh (\sigma_{ij})$. The channel gains $g_{ij} \triangleq |h_{ij}|^2$ are, therefore, exponentially distributed, i.e., $g_{ij} \sim \exponential (\sigma^{-2}_{ij}/2)$.  Terms $n_r$ and $n_{d}$ represent the additive white Gaussian noise (AWGN) at the relay and the destination, respectively, both with power $N_r$ and $N_d$. Without loss of generality we assume $N_r = N_d = 1$. 
	
In the remaining $T/2$ of the slot, the relay amplifies the received signal by a gain factor (for more details see \cite{LTS}), and forwards it to the destination, which receives: 
\begin{align}
\label{2.2}
y_{d} = \sqrt{\frac{P_r}{h_{sr} + n_r}}h_{rd}y_{r} + n_{d},
\end{align}
where $x_{r}$ is the signal transmitted by the relay and $P_r$ is the power of transmission of the relay and $n_{rd}$ is the noise associated with the relay-destination channel. At the end of the two slots, the destination performs maximum ratio combining (MRC) of the received copies of the packet to recover the information. To perform MRC, we assume that the receiving terminals can estimate the channel gain coefficients with high accuracy. We assume that the transmitter knows only the distribution of the channel coefficients.

Motivated by URLLC applications, where short packets have to be sent with high reliability, we assume that we have quasi-static fading channel conditions among ARQ rounds, i.e., the channel coefficients $h_{ij}$ from transmitter $i$ to receiver $j$ remain constant for the duration of one time slot of duration $T$ and change independently between ARQ rounds. Lastly, we assume to have one-bit feedback, which is instantaneous and error free. 
	
%
%
%
%
%
%
\section{Outage Behavior}
\label{III}
In this section, we compute the outage probability formula for the transmission within one ARQ round. The derivations in this section follow closely the work presented earlier in \cite{LTS}. However, the results presented there are limited only to the case when the source and the relay powers are the same, i.e, $P_s= P_r = P$. 

During one ARQ transmission round, the mutual information accumulated from the destination terminal is \cite{LTS, Laneman}
\begin{align}
\label{3.1}
I_{AF} = \frac{1}{2} \log\left(1 + P_s|h_{sd}|^2 + f(P_s|h_{sr}|^2, P_r|h_{rd}|^2)\right),
\end{align}
where $f(x,y) = \frac{xy}{x+y+1}$. For a certain spectral efficiency $R = p/q$, where $p$ is the number of information bits and $q$ is the number of channel uses, an outage occurs when $I_{AF} < R$\footnote{
	Notice that hereafter in order to standardize the notation we assume that all information is encoded in nats instead of bits. Therefore, $\log$ is the natural logarithm.
}. 
\begin{theorem}
	For asymmetric power allocation, when $P_s$ is variable, and the ratio $\frac{P_s}{P_r} \leq \frac{\mu}{\delta} <\infty$, the outage probability of one ARQ round can be found as:
	\begin{align}
	\label{3.3}
	\epsilon = \frac{P_s \sigma_{sr}^2 + P_r \sigma_{rd}^2}{2 P_r \sigma_{sd}^2 \sigma_{sr}^2 \sigma_{rd}^2} \left(\frac{e^{2R}-1}{P_s}\right)^2 .
	\end{align}
\end{theorem}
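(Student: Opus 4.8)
The plan is to translate the information-outage event into a signal-to-noise-ratio (SNR) threshold event and then compute the law of the combined SNR. Since $I_{AF}<R$ is equivalent to $1+P_s g_{sd}+f(P_s g_{sr},P_r g_{rd})<e^{2R}$, I would set $\gamma\triangleq e^{2R}-1$ and write $\epsilon=\Pr\!\left[X+W<\gamma\right]$, where $X\triangleq P_s g_{sd}$ is the direct-link SNR and $W\triangleq f(Y_1,Y_2)$, with $Y_1\triangleq P_s g_{sr}$ and $Y_2\triangleq P_r g_{rd}$, is the relay-link SNR. Because the gains $g_{ij}$ are exponential, $X,Y_1,Y_2$ are independent exponentials whose means scale as $P_s\sigma_{sd}^2$, $P_s\sigma_{sr}^2$ and $P_r\sigma_{rd}^2$; I denote their rates $\lambda_X,\lambda_1,\lambda_2$.

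The core step is the distribution of the harmonic-mean-type variable $W=Y_1Y_2/(Y_1+Y_2+1)$. I would condition on $Y_2=y$: the event $\{W<t\}$ holds for every $Y_1$ when $y\le t$, and requires $Y_1<t(y+1)/(y-t)$ when $y>t$. Integrating the exponential tail of $Y_1$ against the density of $Y_2$ and, after the shift $y=t+s$, invoking the standard identity $\int_0^\infty e^{-ps-q/s}\,ds=2\sqrt{q/p}\,K_1(2\sqrt{pq})$, I expect the closed form $F_W(t)=1-e^{-(\lambda_1+\lambda_2)t}\,wK_1(w)$ with $w=2\sqrt{\lambda_1\lambda_2\,t(t+1)}$, where $K_1$ is the modified Bessel function of the second kind.

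Next I would specialize to the high-reliability regime, i.e.\ fixed $\gamma$ with the rates $\lambda_1,\lambda_2\to0$ as the powers grow. There $wK_1(w)\to1$, the surviving correction being of quadratic order in the rates (up to a logarithmic factor) and hence negligible against the linear term; the hypothesis $P_s/P_r\le\mu/\delta<\infty$ is precisely what keeps both rates vanishing comparably so that this first-order term dominates. Thus $F_W(t)\approx(\lambda_1+\lambda_2)t$ on $t\in[0,\gamma]$ (equivalently, at high SNR $f(Y_1,Y_2)$ behaves like $\min(Y_1,Y_2)$). Assembling the two hops by convolution, $\epsilon=\int_0^\gamma f_X(x)\,F_W(\gamma-x)\,dx$ with $f_X(x)=\lambda_X e^{-\lambda_X x}\approx\lambda_X$ for $x\in[0,\gamma]$ (since $\lambda_X\gamma\to0$), gives $\epsilon\approx\lambda_X(\lambda_1+\lambda_2)\int_0^\gamma(\gamma-x)\,dx=\tfrac12\lambda_X(\lambda_1+\lambda_2)\gamma^2$; the factor $\tfrac12$ is the area of the triangular sum region and is what separates the exact convolution from the cruder product bound $\Pr[X<\gamma]\Pr[W<\gamma]$. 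Substituting the exponential parameters of $X,Y_1,Y_2$ and simplifying then reproduces \eqref{3.3}.

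The main obstacle I anticipate lies in the second step: obtaining the distribution of the coupled relay SNR $W$ and then arguing that its clean linear-in-$t$ coefficient, rather than the $t\ln t$-type correction that would surface if one naively sent $t\to0$, is what governs the outage. Keeping $\gamma$ fixed while the transmit powers grow, under the bounded power-ratio hypothesis, is exactly the device that renders the leading order clean and yields the stated diversity-two scaling $(e^{2R}-1)^2/P_s^2$.
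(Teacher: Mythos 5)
Your proposal is correct, and it reaches \eqref{3.3} by a genuinely different route through the proof's central technical step. The paper never computes the law of the relay SNR exactly: its Lemma~2 establishes $\lim_{\delta \rightarrow 0} \frac{1}{h(\delta)}\mathbb{P}\left(r_{\mu,\delta} < h(\delta)\right) = \lambda_Y + \frac{\mu}{\delta}\lambda_Z$ by an elementary squeeze --- a lower bound that amounts to $f(x,y)\leq \min(x,y)$ (discarding positive terms and comparing with the maximum of the inverses), and an upper bound obtained by a three-way split $B_1, B_{21}, B_{22}$ of the conditioning integral with auxiliary constants $k>l>1$ --- in the style of the Laneman--Tse--Wornell diversity analysis. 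You instead derive the exact CDF $F_W(t) = 1 - e^{-(\lambda_1+\lambda_2)t}\, w K_1(w)$ with $w = 2\sqrt{\lambda_1\lambda_2\, t(t+1)}$ (your conditioning on $Y_2$ and the Bessel identity are applied correctly; this is the known closed form for this harmonic-mean-type statistic) and read off the linear coefficient from $wK_1(w) = 1 + O(w^2 \ln w)$. What your route buys is an exact, all-SNR intermediate formula with fully explicit error control, and a shorter argument; what the paper's route buys is freedom from special functions and easier portability to fading laws without closed-form CDFs. From there the two arguments coincide: both obtain the factor $\tfrac{1}{2}$ from the convolution of $F_W$ with the locally flat density of the direct link over the triangular region (this is exactly the paper's Proposition~1), and both rely on the same asymptotic regime --- fixed $\gamma = e^{2R}-1$ with both rates vanishing comparably --- so that your observation about suppressing the $t\ln t$ correction is precisely the role played in Lemma~2 by the hypotheses $\delta/h(\delta) \rightarrow d_1 < \infty$ and $\mu/\delta = a < \infty$. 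One caveat you share with the paper: the final substitution reproduces \eqref{3.3} only under the convention $\mathbb{E}[g_{ij}] = \sigma_{ij}^2$, i.e., $\lambda_X = 1/(P_s \sigma_{sd}^2)$, $\lambda_1 = 1/(P_s\sigma_{sr}^2)$, $\lambda_2 = 1/(P_r\sigma_{rd}^2)$, rather than the rate $\sigma_{ij}^{-2}/2$ stated in the system model (which would introduce an extra factor of $\tfrac{1}{4}$); the paper's appendix leaves this last substitution equally implicit.
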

\begin{proof}
See Appendix~\ref{A}.
\end{proof}
In what follows, we compute the probability that the packet is not decoded correctly even after a certain number of transmission rounds, $m$, has occurred. This follows from the assumption that all the transmissions of the packets experience independent fading conditions, so the total outage probability becomes
\begin{align}
\label{eq:16}
E_M=\prod_{m=1}^M \epsilon_m \mathrm{,}
\end{align}
where $\epsilon_m$ is the outage probability of the $m^{th}$ ARQ round. Since no transmission is done at round $m=0$, the outage probability is $\epsilon_0=1$.  

%
%
%
%
\section {Optimal power allocation strategy}
\label{IV}
In this section, we provide the OPA strategy across different ARQ rounds in closed form. The problem of interest is to achieve a target outage probability while spending as little power as possible for sending the information from the transmitter to the receiver. Naturally, since no Channel State Information (CSI) is available at the transmitter, one approach would be to allocate the same amount of power across all transmission rounds. We show, in the numerical evaluation, that our proposed power allocation scheme has lower power expenditure for a fixed target outage probability. 

The outage-weighted average transmitted power is defined as
\begin{align}
\label{eq:15}
P_{\mathrm{avg}} :=\frac{1}{M}\sum_{m=1}^M P_m E_{m-1}\mathrm{,}
\end{align}
where $M$ is the maximum number of ARQ rounds, $P_m$ is the power transmitted in the $m^{th}$ round and $E_{m-1}$ is the outage probability up to round $m-1$.	Mathematically, the problem of interest can be formulated as follows
\begin{equation}
\begin{aligned}
\label {eq:17}
& {\text{minimize}}
& &P_{ \mathrm{avg}} \\
& \text{subject to}
& & 0 \leq P_{m}, \; m \in \{1, \ldots, M\} \\
&&& E_M = \epsilon ,
\end{aligned}
\end{equation}
where $\epsilon$ is any target outage probability. Problem \eqref{eq:17} is a Geometric Program (GP) and in what follows we will derive its closed form solution, which allows for much faster computation of the problem solution when compared to running an optimization algorithm. This is essential in the context of URLLC, since it results in minimization of the end-to-end delay. Furthermore, in the numerical section we utilize the CVX GP solver and show that the proposed OPA matches the solver's results. Eq.~\eqref{3.3} at ARQ round $m$ can be re-written as
\begin{align}
\label{new_outage}
\epsilon_m = \psi(\eta_m)\left(\frac{\phi_m}{P_{m}}\right)^2, \; m \in \{1, \ldots, M\},
\end{align}
where $\eta_m$ can be computed as the ratio of the power transmitted by the relay in the $m^{th}$ round to the power transmitted by the source in the $m^{th}$ round, i.e., $\eta_m := P_{r,m}/P_m$. Furthermore, $\psi(\eta_m) := \frac{\frac{1}{\eta_m} \sigma_{sr}^2 + \sigma_{rd}^2}{2  \sigma_{sd}^2 \sigma_{sr}^2 \sigma_{rd}^2}$ and $\phi_m(R) = e^{2R/m} -1$. In this work, we assume that the amplifier's gain factor $\alpha$ is set \emph{a priori}, i.e., $\eta_{m}$ is fixed, and we only optimize over the source power. The OPA strategy for optimization problem~\eqref{eq:17} is given in Theorem~\ref{theorem:2}.

\begin{theorem}\label{theorem:2}
	The OPA strategy for the AF-ARQ protocol is as follows
	\begin{align}
	\label{eq,c1:P_M}
	P_M&=\sqrt[3]{ 2 \lambda \phi_M \psi(\eta_M)},
	\\ \label{rec}
	P_m&=\sqrt{3 \phi_m \psi(\eta_{m}) P_{m+1}}, \; i \in \{1, \ldots, M-1\}.
	\end{align}
\end{theorem}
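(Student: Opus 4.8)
The plan is to treat \eqref{eq:17} as the Geometric Program it is and to solve its Karush--Kuhn--Tucker (KKT) system in closed form. Since the positive factor $1/M$ in \eqref{eq:15} does not affect the minimizer, I would first drop it and work with the objective $\sum_{m=1}^M P_m E_{m-1}$. Passing to the logarithmic variables $x_m := \log P_m$ renders this posynomial objective convex (a log-sum-exp) and the monomial constraint $E_M=\epsilon$ affine, so the program is convex and any KKT point is the global optimum; it therefore suffices to exhibit one. Moreover, driving any $P_m \to 0$ sends $\epsilon_m \to \infty$ and violates $E_M=\epsilon$ for finite $\epsilon$, so the optimal powers are strictly positive and the constraints $P_m \ge 0$ are inactive. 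The KKT system then reduces to stationarity of
\[
\mathcal{L} = \sum_{m=1}^M P_m E_{m-1} + \lambda\bigl(E_M - \epsilon\bigr),
\]
with a single multiplier $\lambda$, together with primal feasibility $E_M=\epsilon$.

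The crucial bookkeeping step is to account for every appearance of a given $P_m$. Using $\epsilon_m = \psi(\eta_m)(\phi_m/P_m)^2$ from \eqref{new_outage} and $E_j=\prod_{k=1}^j \epsilon_k$ from \eqref{eq:16}, the variable $P_m$ enters $\mathcal{L}$ directly through the term $P_m E_{m-1}$, and indirectly — via the factor $\epsilon_m \propto P_m^{-2}$ — in every downstream objective term $P_j E_{j-1}$ with $j>m$ as well as in the constraint term $\lambda E_M$. Differentiating and multiplying through by $P_m$ to clear denominators yields, for each $m$, a stationarity condition of the form $P_m E_{m-1} = 2\sum_{j=m+1}^M P_j E_{j-1} + 2\lambda E_M$, where the nested tail-sum is the price of the indirect dependence.

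These tail-sums are then eliminated by subtracting the condition for $P_{m+1}$ from that for $P_m$: the two sums differ by the single term $P_{m+1}E_m$, the $\lambda$-terms cancel, and one is left with the clean two-term balance $P_m E_{m-1} = 3\,P_{m+1}E_m$ for $m\in\{1,\dots,M-1\}$ (the factor $3$ being the direct coefficient $1$ plus the doubled indirect contribution). Substituting $E_m = E_{m-1}\epsilon_m$ and cancelling the positive factor $E_{m-1}$ removes the outage products entirely and rearranges, on inserting the form of $\epsilon_m$, into the closed-form recursion \eqref{rec}. The boundary index $m=M$ carries no tail-sum, so its condition is simply $P_M E_{M-1} = 2\lambda E_M$; the same cancellation of $E_{M-1}$ ties $P_M$ directly to the multiplier and gives the base case \eqref{eq,c1:P_M}. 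Finally $\lambda$ is pinned down by back-substituting the entire chain $P_{M},P_{M-1},\dots,P_1$ into the feasibility equation $E_M=\epsilon$.

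I expect the main obstacle to be precisely the middle step: correctly tracking that each $P_m$ sits inside all later outage products \emph{and} the terminal constraint, so that the stationarity equations carry the right nested sums, and then spotting the telescoping that collapses them to the two-term relation $P_m E_{m-1}=3P_{m+1}E_m$. Once that structure is visible, everything downstream — cancelling $E_{m-1}$, isolating the terminal condition, and determining $\lambda$ from feasibility — is routine algebra, and convexity of the GP certifies that the resulting stationary point is the global minimizer.
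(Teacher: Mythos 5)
Your proposal is correct and follows essentially the same route as the paper's own proof: write the Lagrangian of \eqref{eq:17}, drop the non-negativity multipliers, impose stationarity, solve the $m=M$ condition to obtain \eqref{eq,c1:P_M}, collapse the remaining conditions into a recursion linking $P_m$ to $P_{m+1}$, and determine $\lambda$ from $E_M=\epsilon$. The differences are in execution, and they favor you. Where the paper eliminates the coupling by sequential back-substitution (plugging $P_M^3=2\lambda\phi_M\psi(\eta_M)$ into \eqref{eq,c1:41}, and so on down the chain), you subtract consecutive stationarity conditions so the nested tail sums telescope in one stroke, making the coefficient $3=1+2$ (direct term plus doubled indirect contribution) transparent. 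You also justify setting the multipliers $\mu_m$ to zero --- feasibility of $E_M=\epsilon$ forces every $P_m>0$ --- whereas the paper simply ``relaxes'' that constraint without comment. Finally, your log-variable convexity observation upgrades the argument from necessary to sufficient: the paper invokes only necessity of the KKT conditions, which by itself does not certify that the stationary point found is the global minimizer.

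One caveat: you should carry out the final algebraic step explicitly instead of asserting that the balance $P_mE_{m-1}=3P_{m+1}E_m$ ``rearranges into \eqref{rec}.'' Substituting $E_m=E_{m-1}\epsilon_m$ with $\epsilon_m=\phi_m\psi(\eta_m)/P_m^2$ (the appendix's convention; the main text's \eqref{new_outage} carries $\phi_m^2$) gives $P_m=3P_{m+1}\epsilon_m$, i.e., $P_m^3=3\,\phi_m\psi(\eta_m)\,P_{m+1}$ --- a \emph{cube} root, consistent with the cube root in \eqref{eq,c1:P_M}, not the square root printed in \eqref{rec}. The paper's own algebra yields the same conclusion: substituting \eqref{eq,c1:21} into \eqref{eq,c1:41} gives $P_{M-1}^3=3\,\phi_{M-1}\psi(\eta_{M-1})\,P_M$, so the $\sqrt{\cdot}$ in the theorem statement (and the stray coefficient $2$ in the paper's in-text evaluation of $P_{M-1}$) is an internal inconsistency of the paper rather than something either derivation produces. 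State the recursion as $P_m=\sqrt[3]{3\,\phi_m\psi(\eta_m)\,P_{m+1}}$ and your proof is complete and, on this point, more careful than the original.
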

\begin{proof}
See Appendix~\ref{C}.
\end{proof}
\begin{remark} \label{remark1}
	In this work, we have limited our analysis only to the case where we need to optimize over the source, while allowing for suboptimal selection of $\eta_m$. From the resulting closed form solution, it is easily observed that all the power terms can be computed recursively, i.e., for a fixed number of transmissions, initial spectral efficiency and $\eta_M$, the power term in the $M^{th}$ round can be prespecified; the other power terms are then recursively computed as described in \eqref{rec}. 
\end{remark} 	
\section{Numerical section}
\label{V}
In this section, we evaluate the performance of the proposed power allocation scheme for different values of $M$ and $\eta$. We also provide comparison with the EPA strategy and the point-to-point optimized ARQ scheme \cite{Dosti}. In what follows, we assume that the initial spectral efficiency $R = 1$ nat per channel use (npcu)and the channel statistics are $\sigma_{sd}^2 = 2$ and $\sigma_{sr}^2 = \sigma_{rd}^2 = 1$.

In Fig.~\ref{fig1}, we illustrate the power that we should allocate in each AF-ARQ round to achieve a target outage probability $\epsilon$ for $M=2$ and different values of $\eta$. First, notice  that by allowing for different power allocation between the source and the relay provides some performance gain (in the sense lower power expenditure) when compared to the case when $P_s = P_r = P$, for both source power terms $P_1$ and $P_2$ while guaranteeing the required reliability level of the system. However, note that allowing the relay to transmit with arbitrarily high power, i.e., increasing $\eta$, does not provide large power savings with respect to the source. This suggests that for a fixed total power budget $P_{\mathrm{total}}$ (i.e., $P_s+P_r \leq P_{\mathrm{total}}$), then there exists an optimal value of $\eta$. As stated in Remark~\ref{remark1}, this problem falls outside the scope of this paper and will be further investigated in later works. Moreover, notice that the OPA strategy suggests transmission with increasing power across subsequent AF-ARQ rounds when lower outage probability values are required in the system. This result is consistent with the ones attained earlier in \cite{Dosti, Dosti2} where point-to-point ARQ systems are analyzed. The intuition behind the result follows from the way the optimization problem is formulated. To achieve a certain target outage probability for $M = 2$ while minimizing power expenditure for a fixed latency, it is better to first transmit with low power and ``hope'' to encounter good channel conditions. In case a failure occurs in the first round, then insist again with a higher power transmission until success or exhaustion of the maximum number of allowed transmissions.
\begin{figure}[!htb] 
	\centering
	\includegraphics[trim=3.7cm 9cm 4cm 10cm, clip=true,height = .81\linewidth, width=1.0\columnwidth]{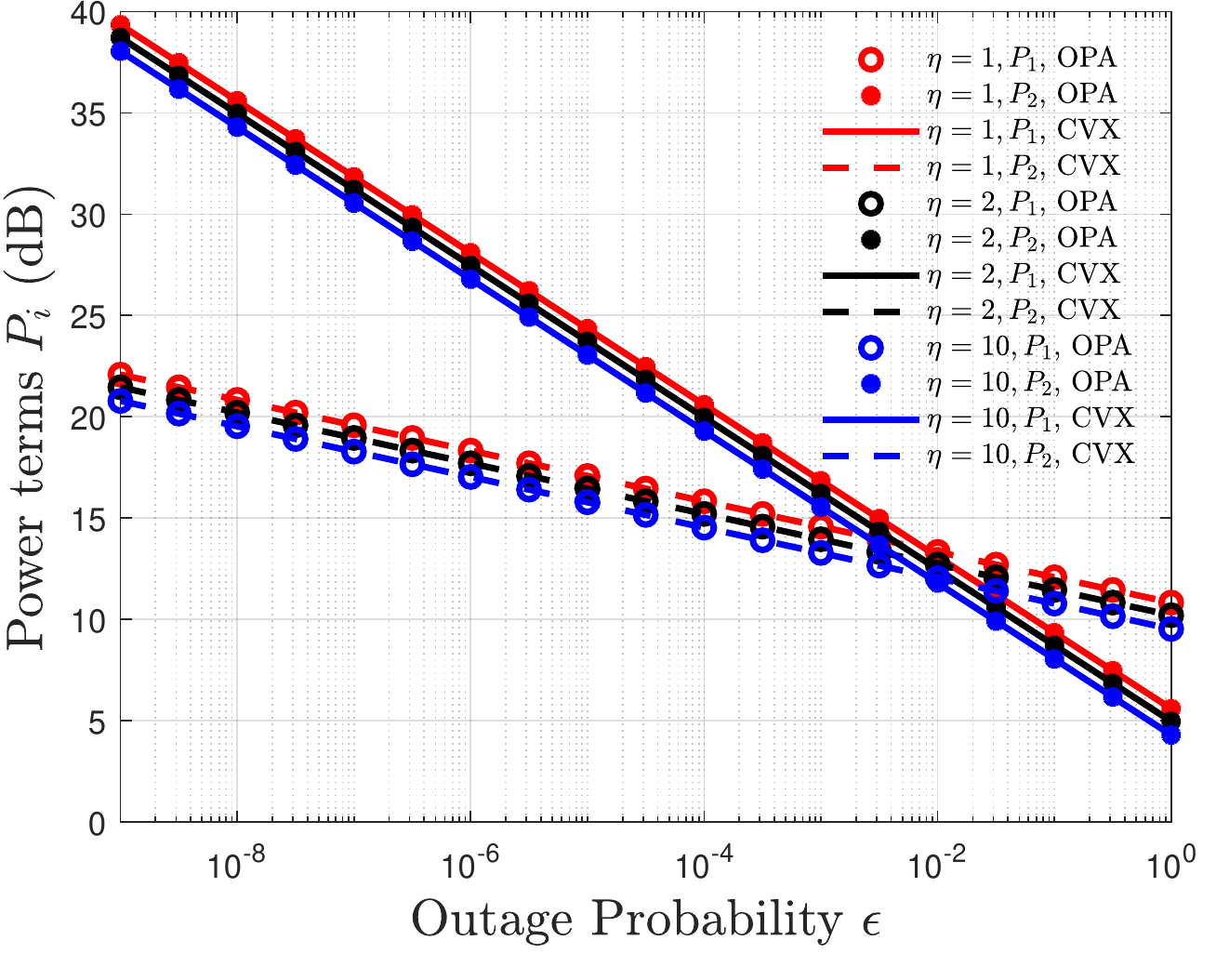}
	\caption{Optimized transmit power in each AF-ARQ round for $M=2$, $R=1$ npcu and different values of $\eta$. }
	\label{fig1}
\end{figure} 

Next, in Fig.~\ref{fig2} we compare the performance of our protocol with respect to the optimized ARQ protocol, derived in \cite{Dosti}. In the plots, we fix $M=3$ and $\eta = 1$. First, we observe that cooperation provides large power savings at the source with respect to point-to-point ARQ. Intuitively, this follows from the fact that since the link between the source and destination has high variance, in the AF-ARQ case the relay, which (usually) has a better link to the destination, can help. On the other hand, in the point-to-point case, the source has to insist with high power in the latter rounds, which yields much higher power consumption. Obviously, in the case of point-to-point ARQ the average power consumption would be larger than AF-ARQ. Furthermore, the presence of the relay gives more robustness to the network in the cases when deep fades are present, which is essential for low-latency communication systems. Another interesting observation follows from the behavior of the power terms for AF-ARQ. For very low outage probability values, e.g., $\epsilon = 10^{-9}$, the optimal allocation strategy is still transmission with incremental power. However, for less stringent requirements on the outage probability the optimal transmission behavior changes, which suggests that there is a trade-off between the delay and power minimization. Intuitively, this happens because moderate outage probabilities are ``easier'' to achieve for ARQ-type schemes. Therefore, the OPA strategy suggests transmission with high power at first, which enables delay minimization. However, mathematical characterization of the change in OPA behavior for different values of $M$ still needs to be understood better and remains an open problem.
\begin{figure}[!htb] 
	\centering
	\includegraphics[trim=3.5cm 9cm 4cm 10cm, clip=true,height = .81\linewidth, width=1.0\columnwidth]{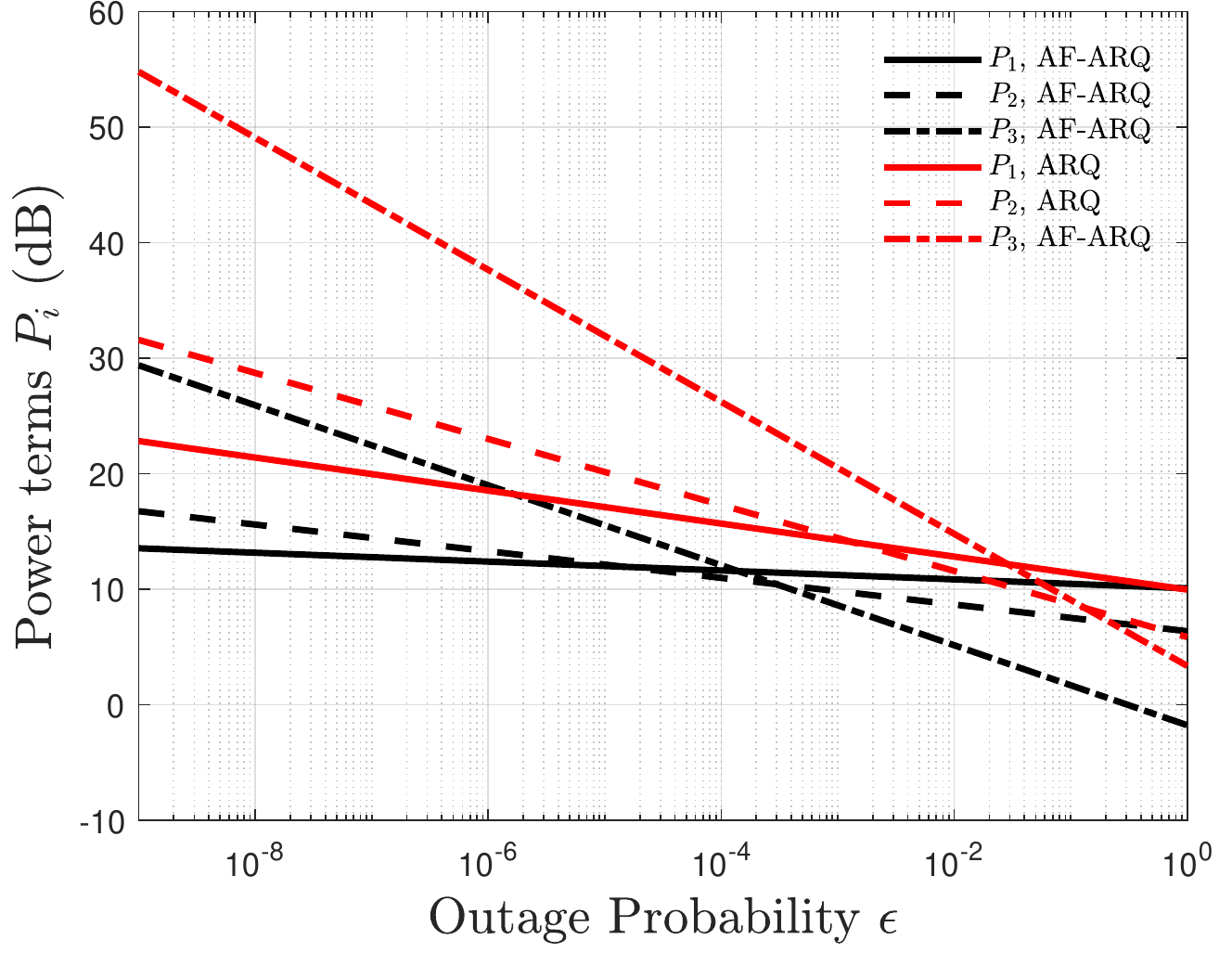}
	\caption{Comparison between optimized transmitted power in AF-ARQ and ARQ $M = 3$, $\eta=1$, $\sigma_{sd}^2 = 2$, $\sigma_{sr}^2 = \sigma_{rd}^2 = 1$ and $R = 1$ npcu.}	
	\label{fig2}
\end{figure}

Lastly, in Fig.~\ref{fig3}, we evaluate the outage-weighted average power expenditure per transmission of our protocols in the case of $M \in \{2, 3\}$ transmissions. Herein, we compare the performance of our scheme with the EPA across different AF-ARQ rounds. To obtain the latter, we can substitute $P_i, \forall i\in\{1 \ldots M\}$ in \eqref{eq:16}. Notice that for conventional outage probabilities, i.e., $\epsilon \in [10^{-2}, 10^{-4}]$, the OPA strategy makes little difference. This result is already presented in \cite{Makki}. However, for very tight reliability constraints, the EPA approach is strictly suboptimal.

\begin{figure}[!htb] 
	\centering
	\includegraphics[trim=3.5cm 9cm 4cm 10cm, clip=true,height = .81\linewidth, width=1.0\columnwidth]{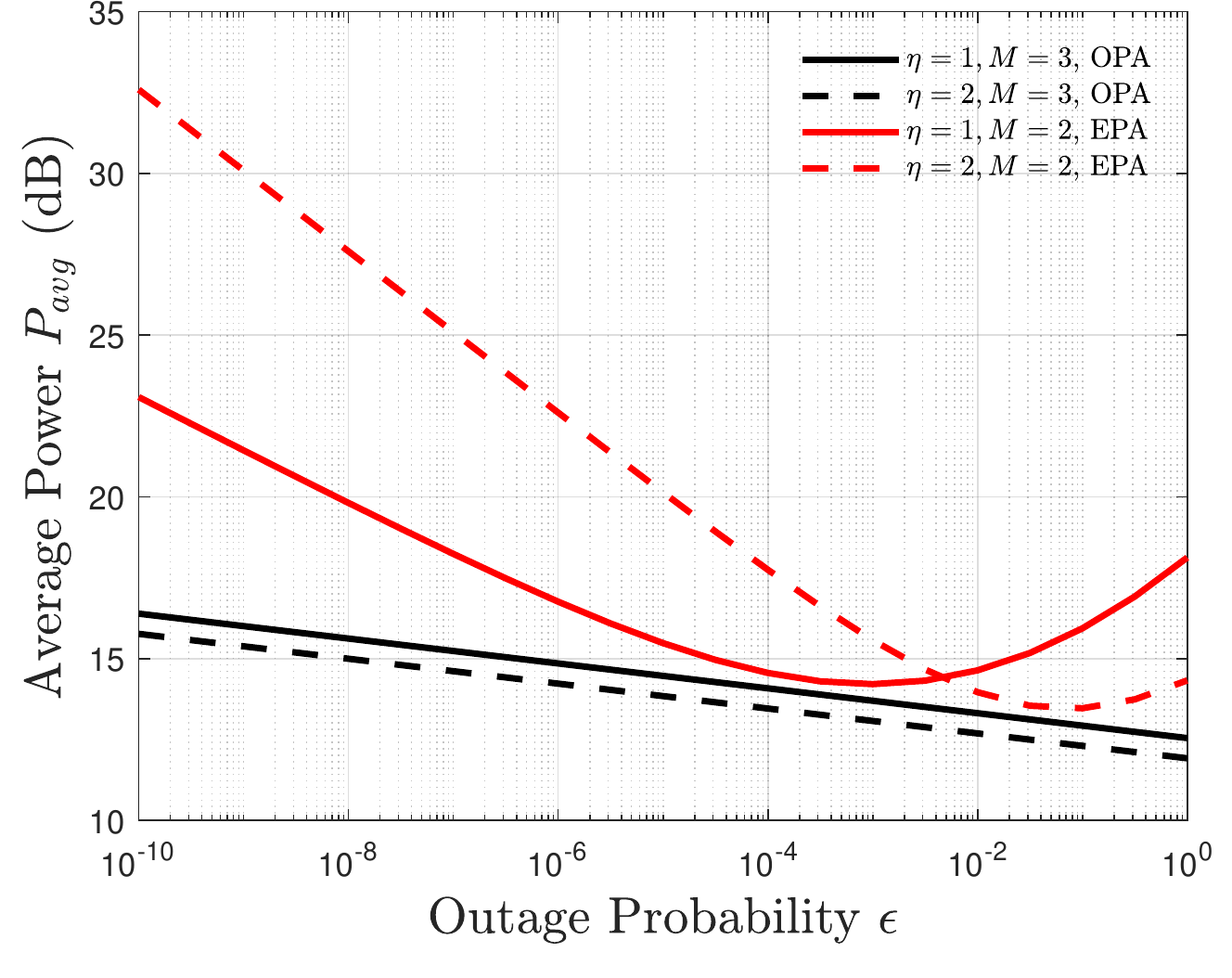}
	\caption{Comparison between OPA and EPA in AF-ARQ for $M \in \{2, 3\}$, $\eta \in \{1, 2\}$ and $R = 1$ npcu.} 
	\label{fig3}
\end{figure}

\section{Conclusions and Future Directions}

In this paper, we analyzed the implementation of AF-ARQ protocol for URLLC. First, we obtained a closed form expression for the outage probability of the protocol. Next, we computed the OPA scheme among different transmission rounds for the protocol. The proposed scheme allows operation in very low outage probabilities while minimizing the average outage-weighted power expenditure. We showed that the proposed strategy suggests transmission with increasing power in each ARQ round. Furthermore, through simulations we showed that the proposed scheme outperforms the selected benchmarks and produces large power gains. 


There are many extensions to this work. First, it would be of high interest to provide a mathematical characterization of the trade-off between delay and power minimization. Secondly, we believe that it would be interesting to evaluate the OPA strategy under different channel models, such as, Ricean or Nakagami-m. Another insightful extension would follow from the analysis the scenario when both $P_s$ and $P_r$ are allowed to be free variables. Moreover, it would be interesting to see how extending the number of relays in the network would help with improving the overall system performance. Lastly, we believe that it would be interesting to look at the problem of minimizing the end-to-end latency for a fixed power budget and target outage probability. 

\begin{appendices}
	\section{Proof of Theorem 1.}
	\label{A}
	In the following, for notation convenience we denote $X\triangleq |h_{sd}|^2$, $Y\triangleq |h_{sr}|^2$, $Z\triangleq |h_{rd}|^2$, $P_s = 1/\delta$ and $P_r = 1/\mu$. To prove the theorem, we need the following lemmas:
	\begin{lemma}
		Let $g(x)$ be a continuous function around some point $x=x_0$, satisfying $g(x) \rightarrow 0$ as $x \rightarrow x_0$. If $X~\sim \exponential (\lambda_X)$ we have: 
		\begin{align}
		\label{p.f} 		
		\lim_{x \rightarrow x_0} \frac{1}{g(x)} \mathbb{P} \left(X <g(x)\right) = \lambda_{X},
		\end{align}
		where $\mathbb{P} \left(X <g(x)\right)$ denotes the probability of random variable (RV) $X$ to be less than $g(x)$.
	\end{lemma}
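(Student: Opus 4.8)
The plan is to reduce the statement to the explicit cumulative distribution function of the exponential law and then evaluate an elementary one-sided limit. First I would recall that for $X \sim \exponential(\lambda_X)$ the CDF is $\mathbb{P}(X < t) = 1 - e^{-\lambda_X t}$ for $t \geq 0$ (and $0$ for $t < 0$). Since $g(x) \to 0$ as $x \to x_0$ and, as in the intended application where $g$ is a positive outage threshold, $g(x) > 0$ on a punctured neighbourhood of $x_0$, I may write
$$\frac{1}{g(x)}\,\mathbb{P}\bigl(X < g(x)\bigr) = \frac{1 - e^{-\lambda_X g(x)}}{g(x)}.$$

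Next I would perform the change of variable $t = g(x)$, which is legitimate because $g$ is continuous with $g(x) \to 0$; this converts the limit into
$$\lim_{t \to 0^{+}} \frac{1 - e^{-\lambda_X t}}{t}.$$
This is a $0/0$ indeterminate form that I would resolve either through the first-order expansion $e^{-\lambda_X t} = 1 - \lambda_X t + O(t^2)$, giving $(1 - e^{-\lambda_X t})/t = \lambda_X + O(t) \to \lambda_X$, or equivalently by L'Hopital's rule, which yields $\lim_{t \to 0^{+}} \lambda_X e^{-\lambda_X t} = \lambda_X$. The conceptually cleanest phrasing is to recognise the ratio as the difference quotient of $F_X(t) = 1 - e^{-\lambda_X t}$ at $t = 0$, so that the limit equals $F_X'(0) = f_X(0) = \lambda_X$, the exponential density evaluated at the origin.

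The only point requiring care, and the one I would make explicit, is the hypothesis that $g(x) > 0$ near $x_0$. This is the main obstacle in the sense that it is the sole place where the argument could fail: if $g$ were permitted to take nonpositive values, then $\mathbb{P}(X < g(x)) = 0$ there and the ratio would vanish rather than tend to $\lambda_X$, so the statement tacitly presupposes positivity of $g$. In the outage analysis this holds automatically, since $g(x)$ arises as a strictly positive threshold on a channel gain. With this understood, the crucial feature is that the limiting value is independent of the particular $g$ chosen; this is exactly what makes the lemma useful for substituting the various vanishing thresholds that appear when deriving the outage expression of Theorem~1.
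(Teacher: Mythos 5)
Your proof is correct and follows essentially the same route as the paper: both reduce $\mathbb{P}\left(X < g(x)\right)$ to the exponential CDF $1 - e^{-\lambda_X g(x)}$ and then resolve the resulting $0/0$ form. The only difference worth noting is that the paper applies L'Hopital's rule directly in the variable $x$, obtaining $\lim_{x\to x_0}\lambda_X g'(x) e^{-\lambda_X g(x)}/g'(x)$, which tacitly assumes $g$ is differentiable --- a hypothesis the lemma never states --- whereas your substitution $t = g(x)$ (followed by Taylor expansion, L'Hopital in $t$, or recognizing the difference quotient of the CDF at the origin) needs only the stated continuity; your explicit flagging of the positivity of $g$ near $x_0$ is likewise a hypothesis the paper uses silently.
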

	
	\begin{proof}
		\begin{align*}
		\lim_{x \rightarrow x_0}  \frac{1}{g(x)} \mathbb{P} \left(X <g(x)\right) &\stackrel{(a)}{=}  \lim_{x \rightarrow x_0} \frac{1-e^{-\lambda_{X} g(x)}}{g(x)} 
		\\
		&\stackrel{(b)}{=}  \lim_{x \rightarrow x_0} \frac{\lambda_{X} g'(x) e^{-\lambda_{X} g(x)}}{g'(x)} \stackrel{(c)}{=} \lambda_{X},
		\end{align*}
		where $(a)$ follows from the definition of the Cumulative Distribution Function (CDF) of exponential distribution; $(b)$ follows from L'Hopitals rule; $(c)$ follows by noting that the $\lim_{x \rightarrow x_0} e^{-\lambda_{X} g(x)} =1$.
	\end{proof}
	
	\begin{lemma}
		Let $(\mu, \delta) > (0, 0)$ and $r_{\mu, \delta} = \delta f(\frac{Y}{\delta}, \frac{Z}{\mu})$, where $f(x,y) = \frac{xy}{x+y+1}$. Let $h ( \delta) > 0$ be a continuous function with $h ( \delta) \rightarrow 0$ and $\frac{\delta}{h ( \delta)} \rightarrow d_{1} < \infty$ as $\delta \rightarrow 0$. Let also $\frac{\mu}{\delta} = a < \infty$ as $\delta \rightarrow 0$. Then,
		\begin{align}
		\lim_{\delta \rightarrow 0} \frac{1}{h ( \delta)} \mathbb{P} \left( r_{\mu, \delta} <h (\delta) \right) = \lambda_Y + \frac{\mu}{\delta}\lambda_Z.
		\end{align}
	\end{lemma}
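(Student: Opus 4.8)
The plan is to reduce the event $\{r_{\mu,\delta} < h(\delta)\}$ to two elementary ``single-link-in-outage'' events whose probabilities are delivered by Lemma~1, and then to show that what is left over is of lower order. First I would make $r_{\mu,\delta}$ explicit: writing $a \triangleq \mu/\delta$ (finite by hypothesis, so $\mu = a\delta \to 0$) and clearing denominators in $f$,
\begin{align}
r_{\mu,\delta} = \delta\, f\!\left(\tfrac{Y}{\delta},\tfrac{Z}{\mu}\right) = \frac{YZ}{a Y + Z + \mu},
\end{align}
so that, abbreviating $h \triangleq h(\delta)$, the event of interest is $\{\, YZ < h(aY + Z + \mu)\,\}$.

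For the lower bound on the probability I would exploit the elementary inequality $f(x,y) \le \min(x,y)$, valid for all $x,y>0$ since $x+y+1>\max(x,y)$. Multiplying by $\delta$ gives $r_{\mu,\delta} \le \min(Y, Z/a)$, whence
\begin{align}
\{Y < h\} \cup \{Z < a h\} = \{\min(Y,Z/a)<h\} \subseteq \{ r_{\mu,\delta} < h \}.
\end{align}
Applying Lemma~1 to $Y$ with $g=h$ and to $Z$ with $g=ah$ (both tending to $0$) yields $\mathbb{P}(Y<h) = \lambda_Y h + o(h)$ and $\mathbb{P}(Z<ah)=a\lambda_Z h + o(h)$, while the overlap satisfies $\mathbb{P}(Y<h,\,Z<ah) = O(h^2)$ by independence. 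Inclusion--exclusion then gives $\liminf_{\delta\to0} \tfrac1h\,\mathbb{P}(r_{\mu,\delta}<h) \ge \lambda_Y + a\lambda_Z$.

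For the matching upper bound I would decompose $\{r_{\mu,\delta}<h\}$ into $\{Y<h\}\cup\{Z<ah\}$ and the ``core'' $\{r_{\mu,\delta}<h,\,Y\ge h,\,Z\ge ah\}$. On the core, the defining inequality rearranges to $(Y-h)(Z-ah) < ah(h+\delta) \triangleq c$. Since there $Y-h\ge0$ and $Z-ah\ge0$, memorylessness of the exponential law lets me replace $Y-h$ and $Z-ah$ by fresh independent exponentials $\tilde Y,\tilde Z$, reducing the core probability to $e^{-\lambda_Y h - \lambda_Z a h}\,\mathbb{P}(\tilde Y\tilde Z < c)$. The crux, and the step I expect to be the main obstacle, is to show this is $o(h)$: because $\delta/h\to d_1<\infty$ we have $c=O(h^2)$, and a tail estimate for the product of two exponentials (conditioning on $\tilde Y$, bounding $1-e^{-\lambda_Z c/\tilde Y}\le\min(1,\lambda_Z c/\tilde Y)$, and recognising an exponential-integral term) gives $\mathbb{P}(\tilde Y\tilde Z<c)=O\big(c\log(1/c)\big)=O\big(h^2\log(1/h)\big)=o(h)$. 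I would stress that the naive ``strip'' bound of the hyperbolic core by $\{\,Y-h<\sqrt c\,\}\cup\{\,Z-ah<\sqrt c\,\}$ only yields $O(\sqrt c)=O(h)$ and is too weak, so the logarithmic product estimate is genuinely needed. Combining the core estimate with $\mathbb{P}(Y<h)+\mathbb{P}(Z<ah)= \lambda_Y h + a\lambda_Z h + o(h)$ gives $\limsup_{\delta\to0}\tfrac1h\,\mathbb{P}(r_{\mu,\delta}<h)\le \lambda_Y + a\lambda_Z$, and together with the lower bound this proves $\lim_{\delta\to0}\tfrac1h\,\mathbb{P}(r_{\mu,\delta}<h)=\lambda_Y+\tfrac{\mu}{\delta}\lambda_Z$. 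Note in particular that $d_1$ never enters the leading term, exactly because it affects only the bounded core, which is of lower order.
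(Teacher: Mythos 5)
Your proof is correct, and while your lower bound is essentially the paper's (both hinge on $f(x,y)\le\min(x,y)$, hence $\{\min(Y,Z/a)<h\}\subseteq\{r_{\mu,\delta}<h\}$; the paper just evaluates this probability in closed form as $1-e^{-(\lambda_Y+a\lambda_Z)h(\delta)}$ instead of by inclusion--exclusion), your converse takes a genuinely different route. The paper conditions on the exponential variable $Z/a$ and splits the integration range into $[0,\,lh(\delta)]$, $[lh(\delta),\,kh(\delta)]$, $[kh(\delta),\,\infty)$ with auxiliary constants $k>l>1$, bounds the three resulting terms $B_1$, $B_{21}$, $B_{22}$ separately via Lemma 1, and recovers $\lambda_Y+a\lambda_Z$ only after letting $k$ be large and $l\approx 1$. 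You instead partition the event into the two single-link outages plus a core, factor the core exactly as $(Y-h)(Z-ah)<ah(h+\delta)=:c$ (using $\mu=a\delta$), reduce it by memorylessness to the product tail $\mathbb{P}(\tilde Y\tilde Z<c)$, and dispose of it with the $O(c\log(1/c))$ estimate. What your route buys: an explicit remainder $O\bigl(h^2\log(1/h)\bigr)$ with no auxiliary constants or iterated limits, and a transparent localization of the hypothesis $\delta/h(\delta)\to d_1<\infty$, which enters only to guarantee $c=O(h^2)$; in the paper the same hypothesis is hidden inside the bounds on $B_{21}$ and $B_{22}$, where $\mu/(kh(\delta))$ must stay bounded. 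Your warning that the naive strip bound $\{\min(Y-h,\,Z-ah)<\sqrt{c}\}$ yields only $O(\sqrt{c})=O(h)$ is also well taken: that is exactly the gap between $O(h)$ and $o(h)$ that the paper's three-interval split (with $k\to\infty$) is designed to close, and your logarithmic product estimate closes it more directly. What the paper's route buys, in turn, is independence from the exact algebraic cancellation: the conditioning-and-splitting argument would still go through for variants of $f$ whose core region does not factor into a product of shifted exponentials.
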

	
	\begin{proof} 
		We start with the lower bound:
		\begin{align}
		\nonumber
		&\mathbb{P} \left( \delta f\left(\frac{Y}{\delta}, \frac{Z}{\mu}\right) < h (\delta) \! \right) \! \\ \nonumber  
		&\stackrel{(a)}{=} \mathbb{P} \left( \frac{\frac{Y}{\delta}+ \frac{Z}{\mu}+1}{Y \frac{Z}{\mu}} \geq \frac{1}{h (\delta)}\right)
		\\ 
		\nonumber 
		&\stackrel{}{=} \mathbb{P} \left(\frac{1}{\frac{Z \delta}{\mu}} + \frac{1}{Y} + \frac{\mu}{YZ} \geq \frac{1}{h (\delta)}\right)
		\\ \nonumber
		&\stackrel{(b)}{\geq} \mathbb{P} \left(\frac{1}{\frac{Z \delta}{\mu}} + \frac{1}{Y} \geq \frac{1}{h (\delta)}\right)
		\\ \nonumber
		&\stackrel{(c)}{\geq} \mathbb{P} \left( \max \left( \frac{1}{\frac{Z \delta}{\mu}}, \frac{1}{Y}\right) \geq \frac{1}{h (\delta)}\right)
		\\ \nonumber
		&\stackrel{}{=} 1- \mathbb{P} \left(\frac{1}{Y} \geq \frac{1}{h (\delta)}\right) \mathbb{P} \left(\frac{1}{\frac{\delta}{\mu} Z} \geq \frac{1}{h (\delta)} \right)
		\\ \nonumber
		&\stackrel{}{=} 1- \left(\int_{0}^{h_{1} (\delta)} \lambda_Y e^{-\lambda_{Y} y} dy\right) \left(\int_{0}^{h_{1} (\delta)} \frac{\mu}{\delta}\lambda_Z e^{-\frac{\mu}{\delta} \lambda_{Z} z} dz\right)
		\\ 
		&\stackrel{}{=} 1- e^{-\left(\lambda_Y + \frac{\mu}{\delta}\lambda_Z\right) h (\delta)},
		\end{align}
		where $(a)$ follows by making some algebraic manipulations; $(b)$ follows from the fact that we are discarding a positive term, thus reducing the chances of the event happening; $(c)$ follows again from the fact that a positive term is discarded. Since $\delta$ and $\mu$ denote the powers of the source and the relay, respectively, it is plausible to assume that $\frac{\mu}{\delta} = a < \infty$. Therefore, 
		\begin{align}
		\label{p.5}
		\inf \lim_{\delta \rightarrow 0} \frac{1}{h (\delta)} \mathbb{P} \left( \delta f\left(\frac{Y}{\delta}, \frac{Z}{\mu}\right) < h (\delta) \right)  \geq \lambda_Y + a \lambda_Z.
		\end{align}
		Next, we prove the converse. Let $l >1$ be a constant. Consider:
		\begin{align}
		\nonumber
		&\mathbb{P}\! \left(\! \delta f\left(\frac{Y}{\delta},  \frac{Z}{\mu}\right) \! < \! h (\delta) \! \right) \! \\
		\nonumber
		&\stackrel{(a)}{=} \mathbb{P} \left( \frac{1}{Y} \left(1 + \frac{\mu}{Z}\right) \geq \frac{1}{h (\delta)} - \frac{1}{\frac{Z}{a}}\right)
		\\ \nonumber
		&\stackrel{(b)}{=} \int_{0}^{\infty} \mathbb{P} \left( \frac{1}{Y} \geq \frac{\frac{1}{h (\delta)} - \frac{1}{\frac{Z}{a}}}{1 + \frac{\mu}{Z}}\right) p_{\frac{Z}{a}} \left(\frac{z}{a}\right) d\frac{z}{a}
		\\ \label{p.7}
		&\stackrel{(c)}{=} \underbrace{\int_{0}^{l h (\delta)} \mathbb{P} \left( \frac{1}{Y} \geq \frac{\frac{1}{h (\delta)} - \frac{1}{\frac{Z}{a}}}{1 + \frac{\mu}{Z}}\right) p_{\frac{Z}{a}} \left(\frac{z}{a}\right) d\frac{z}{a}}_{\triangleq B_1} \\ \nonumber &+ \underbrace{\int_{l h (\delta)}^{\infty} \mathbb{P} \left(  \frac{1}{Y} \geq \frac{\frac{1}{h (\delta)} - \frac{1}{\frac{Z}{a}}}{1 + \frac{\mu}{Z}}\right) p_{\frac{Z}{a}} \left(\frac{z}{a}\right) d\frac{z}{a}}_{\triangleq B_2} ,
		\end{align}
		where $(a)$ stems after algebraic manipulation. The integrals in $(b)$ and $(c)$ are over the exponential RV $\frac{Z}{a}$ and $p_{\frac{Z}{a}} \left(\frac{z}{a} \right)$ denotes the pdf of that random variable. Next, we start to bound each of the terms in \eqref{p.7}. To bound $B_1$ we start by noticing that \vspace{-1mm}
		\begin{align}
		\nonumber
		B_1 &\stackrel{(a)}{=} a\lambda_{Z} \int_{0}^{l h (\delta)} \mathbb{P} \left( \frac{1}{Y} \geq \frac{\frac{1}{h (\delta)} - \frac{1}{\frac{Z}{a}}}{1 + \frac{\mu}{Z}}\right) e^{-a\lambda_{Z} z} d\frac{z}{a}
		\\
		\label{p.9}
		&\stackrel{(b)}{\leq} a \lambda_{Z} l,
		\end{align}
		where $(a)$ follows from the definition of the pdf of the RV $\frac{Z}{a}$; $(b)$ follows from the facts that the integral in $(a)$ produces a number smaller than or equal to one and $l > 1$.
		
		Next, to bound $B_2$ we let $k > l > 1$ be another constant. Further, let \vspace{-3.5 mm}
		\begin{align}
		\label{p.10}
		&B_2 =\underbrace{ \int_{l h (\delta)}^{k h (\delta)} \mathbb{P} \left( \frac{1}{Y} \geq \frac{\frac{1}{h (\delta)} - \frac{1}{\frac{Z}{a}}}{1 + \frac{\mu}{Z}}\right) p_{\frac{Z}{a}} \left(\frac{z}{a}\right) d\frac{z}{a}}_{\triangleq B_{21}} \\ \nonumber &+ \underbrace{\int_{k h (\delta)}^{\infty} \mathbb{P} \left(  \frac{1}{Y} \geq \frac{\frac{1}{h (\delta)} - \frac{1}{\frac{Z}{a}}}{1 + \frac{\mu}{Z}}\right) p_{\frac{Z}{a}} \left(\frac{z}{a}\right) d\frac{z}{a}}_{\triangleq B_{22}} . 
		\end{align}
		We start by bounding $B_{22}$:
		\begin{align}
		\nonumber B_{22} &\stackrel{(a)}{=} \int_{k h (\delta)}^{\infty} \mathbb{P} \left(  \frac{1}{Y} \geq \frac{Z - a h (\delta)}{\left(Z + \mu\right) h (\delta)}\right) p_{\frac{Z}{a}} \left(\frac{z}{a}\right) d\frac{z}{a}
		\\ \nonumber
		&\stackrel{(b)}{\leq} \mathbb{P} \left(\frac{1}{Y} \geq \frac{k h (\delta) - a h (\delta)}{\left(kh (\delta) + \mu\right)h (\delta)}\right)
		\\ \nonumber
		&\stackrel{}{=} \mathbb{P} \left(\frac{1}{Y} \geq \frac{1- \frac{a}{k}}{h (\delta) + \frac{\mu}{k}}\right)
		\\ 
		&\stackrel{}{=} \mathbb{P} \left(Y \leq \frac{h (\delta) + \frac{\mu}{k}}{1- \frac{a}{k}}\right) \label{d},
		\end{align}
		where $(a)$ is obtained by making some algebraic manipulations to the original expression in \eqref{p.10}; $(b)$ follows because the argument of $\mathbb{P}$ is non-increasing in $Z/a$. Since \eqref{d} is the CDF of the random variable $Y$, we can utilize the result of Lemma 1. Consider
		\begin{align}
		\mathbb{P} \left(Y \leq \frac{h (\delta) + \frac{\mu}{k}}{1- \frac{a}{k}}\right)\! &\stackrel{(a)}{=}\! \mathbb{P} \left( \! Y \leq \frac{ 1 + \frac{\mu}{k h (\delta)}}{1- \frac{a}{k}} h (\delta) \right)
		\\
		\label{p.12}
		&\stackrel{(b)}{\leq} \lambda_{Y} \frac{ 1 + \frac{\mu}{k h (\delta)}}{1- \frac{a}{k}} h (\delta), 
		\end{align} 
		where the bound in $(b)$ is obtained via a similar argument as in \eqref{p.9}. 
		
		Finally, we bound the remaining term, $B_{21}$: 
		\begin{align}
		\nonumber
		&B_{21} \stackrel{(a)}{=} \int_{l}^{k} \mathbb{P} \left(\frac{1}{Y} \geq \frac{\frac{1}{h (\delta)} - \frac{1}{Z' h (\delta)}}{1 + \frac{\mu}{Z' a h (\delta)}}\right) \\ 
		\nonumber &\quad \quad \quad \lambda_{Z' h (\delta)} e^{-\lambda_{Z' h (\delta)} z' h (\delta)} h (\delta) dz'
		\\ \nonumber
		&\stackrel{(b)}{\leq} \lambda_{Z' h (\delta)} \int_{l}^{k} \mathbb{P} \left(\frac{1}{Y} \geq \frac{\frac{1}{h (\delta)} - \frac{1}{Z' h (\delta)}}{1 + \frac{\mu}{Z' a h (\delta)}}\right) h (\delta) dz'
		\\ \nonumber
		&\stackrel{}{\leq}\! \lambda_{Z' h (\delta)} h_2^2 (\delta) \!\int_{l}^{k}\! \frac{1}{h (\delta) } \! \left(\!  \mathbb{P}\! \left(\! Y\! < \! \frac{1 + \frac{\mu}{Z' a h (\delta)}}{\frac{1}{h (\delta)} - \frac{1}{Z' h (\delta)}} \!\right)\! \right) \! dz'
		\\
		\label{p.11}
		&\stackrel{(d)}{=} \lambda_{Z' h (\delta)} h_2^2 (\delta) \int_{l}^{k} \lambda_{Y} \frac{1 + \frac{\mu}{Z' a h (\delta)}}{\frac{1}{h (\delta)} - \frac{1}{Z' h (\delta)}}dz' \\ \nonumber& = h_2^2 (\delta) \gamma (h (\delta), \delta, l, k),
		\end{align}
		where $(a)$ follows from making the change of RV $Z' = \frac{Z}{a h (\delta)}$ and applying the definition of $p_{Z'} (z')$; $(b)$ follows from the fact the exponential function is decaying in $Z'$; $(d)$ would follow from \eqref{p.f}, and $\gamma (h (\delta), \delta, l, k)$ is finite for all $k > l > 1$ as $\delta \rightarrow 0$. 
		
		Finally, combining \eqref{p.9}, \eqref{p.12}, \eqref{p.11} we obtain:
		\begin{align*}
		\sup \lim_{\delta \rightarrow 0} \! \frac{1}{h (\delta)} \mathbb{P}\! \left(r_{\mu, \delta}\! <\! h (\delta)\! \right) &\stackrel{}{\leq} \sup \lim_ {\delta \rightarrow 0} \frac{1}{h (\delta)} \left( B_1\! + \!B_{21} \!+\! B_{22} \right)
		\\
		&\stackrel{}{=} a \lambda_{Z} l + \lambda_{Y}
		\\
		&\stackrel{(a)}{=} a \lambda_{Z}+ \lambda_{Y}, \vspace{-3mm}
		\end{align*}
		where the remainder terms vanish as $\delta \rightarrow 0$ and $k$ is selected large enough; $(a)$ follows from choosing $l \approx 1$. 
	\end{proof}
	
	\begin{proposition} 
		Let $X \sim \text{exp} (\lambda_{X})$, $\delta > 0$ and $g(\delta) > 0$ be continuous with $g(\delta) \rightarrow 0$ and $\delta/g(\delta) \rightarrow c < \infty$ as $\delta \rightarrow 0$. Then:
		\begin{align}
		\lim_{\delta \rightarrow 0} \frac{1}{g^2(\delta)}\mathbb{P} \left(X + r_{\mu, \delta} < g(\delta)\right) = \frac{\lambda_{X} (\lambda_Y + c \lambda_Z)}{2}.
		\end{align}
	\end{proposition}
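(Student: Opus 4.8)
The plan is to treat the event $\{X + r_{\mu,\delta} < g(\delta)\}$ via a convolution, exploiting the independence of $X$ from the pair $(Y,Z)$ (hence from $r_{\mu,\delta}$) together with the fact that the density of $X$ is explicit while that of $r_{\mu,\delta}$ is not. Conditioning on $X$ and integrating against its exponential density $f_X(t) = \lambda_X e^{-\lambda_X t}$, I would write
\begin{equation}
\mathbb{P}\left(X + r_{\mu,\delta} < g(\delta)\right) = \int_0^{g(\delta)} \mathbb{P}\left(r_{\mu,\delta} < g(\delta) - t\right)\lambda_X e^{-\lambda_X t}\,dt.
\end{equation}
Here the CDF of $r_{\mu,\delta}$ is evaluated only at small arguments, which is precisely what Lemma~2 controls, while the exponential factor is close to $1$ because the integration variable is confined to $[0, g(\delta)]$ with $g(\delta)\to 0$.

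The second step is a normalization that exposes the $g^2(\delta)$ scaling. Substituting $t = g(\delta)\tau$ with $\tau \in (0,1)$ gives
\begin{equation}
\frac{1}{g^2(\delta)}\mathbb{P}\left(X + r_{\mu,\delta} < g(\delta)\right) = \int_0^1 \frac{\mathbb{P}\left(r_{\mu,\delta} < g(\delta)(1-\tau)\right)}{g(\delta)}\,\lambda_X e^{-\lambda_X g(\delta)\tau}\,d\tau.
\end{equation}
For each fixed $\tau \in (0,1)$ I would apply Lemma~2 with the admissible scaling function $h(\delta) = g(\delta)(1-\tau)$, which still satisfies $h(\delta)\to 0$ and $\delta/h(\delta)\to c/(1-\tau) < \infty$, obtaining $\mathbb{P}(r_{\mu,\delta} < g(\delta)(1-\tau))/g(\delta) \to (1-\tau)(\lambda_Y + c\lambda_Z)$, while $e^{-\lambda_X g(\delta)\tau}\to 1$. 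The integrand therefore converges pointwise to $\lambda_X(\lambda_Y + c\lambda_Z)(1-\tau)$, and since $\int_0^1 (1-\tau)\,d\tau = \tfrac12$, this produces the claimed limit $\tfrac12\lambda_X(\lambda_Y + c\lambda_Z)$.

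The main obstacle, and the only nontrivial point, is justifying the interchange of limit and integral, since Lemma~2 supplies only a pointwise limit. To legitimize it I would invoke dominated convergence: the upper bounds established inside the proof of Lemma~2 (the estimates on $B_1$, $B_{21}$, $B_{22}$, all linear in the scaling function) yield a constant $C$ and a threshold $\delta_0$ so that $\mathbb{P}(r_{\mu,\delta} < u) \le C\,u$ for all $u \le g(\delta_0)$. Consequently the integrand is bounded by $C(1-\tau)\lambda_X \le C\lambda_X$ uniformly in $\delta$ for small $\delta$, an integrable dominating function on $[0,1]$, which permits passing the limit inside. A matching lower estimate, obtained by replacing $e^{-\lambda_X g(\delta)\tau}$ with $e^{-\lambda_X g(\delta)}\to 1$, would then pin the limit from below and complete a squeeze. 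It is worth verifying that the endpoint $\tau\to 1$ causes no harm: there the factor $(1-\tau)$ vanishes and the ratio $\delta/h(\delta)\to c/(1-\tau)$ stays finite on every $[0,1-\epsilon]$, so the residual strip near $\tau=1$ contributes negligibly under the same linear bound.
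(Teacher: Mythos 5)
Your proposal follows essentially the same route as the paper's proof: conditioning on $X$ to write the probability as a convolution integral, rescaling via $t = g(\delta)\tau$ to expose the $g^2(\delta)$ factor, applying Lemma~2 pointwise with the scaling function $g(\delta)(1-\tau)$, and integrating $(1-\tau)$ over $[0,1]$ to produce the factor $\tfrac12$. In fact you are more careful than the paper on the one delicate point—the interchange of limit and integral, which the paper justifies only by appeal to ``properties of the limit'' while you supply a dominated-convergence argument using the linear bounds from Lemma~2's proof—so the proposal is correct and, if anything, slightly more rigorous.
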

	\begin{proof}
		Building upon the previous lemma, we now consider 
		\begin{align}
		\nonumber
		&\mathbb{P} \left(X + r_{\mu, \delta} < g(\delta) \right) \! \stackrel{(a)}{=} \! \int_{0}^{g(\delta)} \!\!\! \! \mathbb{P} \left(r_{\mu, \delta} < g(\delta) - X \right) p_X(x) dx
		\\ \nonumber
		&\stackrel{}{=} g(\delta) \int_{0}^{g(\delta)} \mathbb{P} \left(r_{\mu, \delta} < g(\delta) \left(1 - \frac{X}{g(\delta)}\right)\right) \lambda_{X} e^{-\lambda_{X} x} dx
		\\ 
		&\stackrel{(b)}{=}\! \! \label{c.15} g(\delta) \! \int_{0}^{1}\! \! \! g(\delta)\! \left( \! 1 - x' \!\right)\! \frac{\mathbb{P} \! \left(r_{\mu, \delta}\! <\! g(\delta) \! \left(1 \! - \! X' \! \right)\!\right)}{g(\delta)\! \left(1\! -\! x'\right)} \lambda_{\!X\!} e^{-\lambda_{\!X\!} g(\delta) x'}\! \! dx' \! \!,
		\end{align}
		where $(a)$ comes from marginalizing out $X$ (notice that $r_{\mu, \delta} \geq 0$ when $X \in \left[0, g(\delta)\right]$); $(b)$ is obtained by the change of variable $X' = \frac{X}{g(\delta)}$. To complete the proof, now we consider:
		\begin{align}
		&\lim_{\delta \rightarrow 0} \frac{1}{g^2(\delta)} g(\delta) \int_{0}^{1} g(\delta) \left(1 - x' \right) \frac{\mathbb{P} \left(r_{\mu, \delta}< g(\delta) \left(1 - X' \right)\right)}{g(\delta) \left(1 -x'\right)} \\ &\nonumber \lambda_{X} e^{-\lambda_{X} g(\delta) x'}dx
		\\ \nonumber
		&\stackrel{(a)}{=} \int_{0}^{1} \left[\lim_{\delta \rightarrow 0} \frac{\mathbb{P} \left(r_{\mu, \delta} < g(\delta) \left(1 - x'\right)\right)}{g(\delta) \left(1 - x'\right)} \right] \lambda_{X} \\ \nonumber &\lim_{\delta \rightarrow 0} e^{-\lambda_{X} g(\delta) x'} dx',
		\\ \nonumber
		&\stackrel{(b)}{=} \int_{0}^{1} \left(1 - x'\right) \left(\lambda_Y + c \lambda_Z\right) \lambda_{X} dx' \stackrel{}{=} \left(\lambda_Y + c \lambda_Z\right) \frac{\lambda_{X}}{2} ,
		\end{align}
		where $(a)$ follows from the properties of the limit (limit of the sum, is the sum of the limits and limit of the product is product of the limits), $(b)$ follows from Lemma $1$.  
	\end{proof}
	By making the appropriate substitutions it is straightforward to obtain the result of Theorem 1.
	\vspace{-1mm} 
	\section{Proof of Theorem 2.}
	\label{C}
	Here, we utilize the necessity of Karush-Kuhn-Tucker (KKT) conditions to obtain the optimal solution.	We start by writing the Lagrangian function for problem \eqref{eq:17}, which is given by
	\begin{align}
	\label{eq,c1:18}
	\mathcal {L}& (P_m, \mu_m,\lambda) \nonumber \\
	&= \frac{1}{M}\sum_{m=1}^M P_m E_{m-1} + \sum_{m=1}^M \mu_m P_m+ \lambda(E_M-\epsilon),
	\end{align}
	where $\mu_m$ for $m=1,\ldots,M$  and $\lambda$ are the Lagrangian multipliers. The KKT conditions are:
	\begin{enumerate}[label=\textbf{C\theenumi},itemsep=2pt,parsep=2pt,topsep=2pt,partopsep=2pt]
		\item $\frac{\partial{\mathcal{L}}}{\partial {P_m}}=0,~ m=1,\ldots,M$,
		\label{c1:c1}	
		\item $\mu_m \geq 0,~ m=1,\ldots,M$,
		\label{c1:c2}
		\item $\mu_mP_m=0,~ m=1,\ldots,M$,
		\label{c1:c3}
		\item $E_M-\epsilon=0$.
		\label{c1:c4}
	\end{enumerate}
	
	We can write the derivative of the Lagrangian function $\mathcal {L} (P_m, \mu_m,\lambda)$ with respect to the power $P_m$ as
	\begin{align}
	\label{eq,c1:19}
	&\frac{\partial \mathcal {L} (P_m, \mu_m,\lambda)}{\partial P_m }= \prod_{i=1}^{m-1} \frac{\phi_i \psi(\eta_i)}{P_i^2} \\ \nonumber &- \sum_{i=1}^{M-m} \frac{2 P_{i+1} \prod_{j=1}^{m+i-1} \phi_j \psi(\eta_j)}{P_m^3 \prod_{j=1, j \neq m}^{m+i-1} P_j} - \frac{2\lambda \prod_{i=1}^{M} \phi_i \psi(\eta_i)}{P_m ^3 \prod_{i=1, i \neq m}^{M} P_i^2},
	\end{align}
	where $\psi(\eta_i) = \frac{\eta_i^2 +1}{2 \eta_i^2}$. To solve the problem we relax the non-negativity requirement for the power terms, i.e. $\mu_m = 0$ $\forall m$. We write \ref{c1:c3} for $m = M$ as 
	\begin{align}
	\label{eq,c1:20}
	\frac{\partial \mathcal {L} (P_m, \mu_m,\lambda)}{\partial P_M } = \prod_{i=1}^{M-1} \frac{\phi_i \psi(\eta_i)}{P_i^2} - \frac{2\lambda \prod_{i=1}^{M} \phi_i \psi(\eta_i)}{P_M ^3 \prod_{i=1}^{M-1} P_i^2}.
	\end{align}
	Equating \eqref{eq,c1:20} to zero, we obtain the transmit power at the $M^{th}$ ARQ round as
	\vspace{-1mm} 
	\begin{align}
	\label{eq,c1:21}
	P_M=\sqrt[3]{ 2 \lambda \phi_M \psi(\eta_M)}.
	\end{align}
	Similarly substituting $m=M-1$ in \eqref{eq,c1:19}, we can rewrite \ref{c1:c1} for $m= M-1$ as
	\begin{align}
	&\frac{\partial \mathcal {L} (P_m, \mu_m,\lambda)}{\partial P_{M-1} }=\prod_{i=1}^{M-2} \frac{\phi_i \psi(\eta_i)}{P_i^2} - \frac{2 P_{M} \prod_{j=1}^{M-1} \phi_j \psi(\eta_j)}{P_{M-1}^3 \prod_{j=1}^{M-2} P_j} \nonumber  \\ & \qquad  \qquad \qquad \qquad - \frac{2\lambda \prod_{i=1}^{M} \phi_i \psi(\eta_i)}{P_{M-1} ^3 \prod_{i=1, i \neq M-1}^{M} P_i^2} . \label{eq,c1:23}
	\end{align}
	Equating \eqref{eq,c1:23} to zero leads to
	\begin{align}
	\label{eq,c1:41}
	P_{M\!-\!1\!}\!=\!\sqrt[3]{\!2 P_M \phi_{\!M\!-\!1\!} \psi(\eta_{M\!-\!1})\! \!+ \! \!\frac{2\lambda \! \phi_M  \phi_{M-1} \psi(\eta_M) \psi(\eta_{M-1})}{P_M^2}}.
	\end{align}
	
	We can continue this procedure for all $m \in \{1, \ldots, M\}$ and the results can be summarized as:
	\vspace{-1mm} 
	\begin{align}
	\nonumber
	P_M&=f(\lambda),\\
	\label{c1:b}
	P_{M-1}&=f(\lambda, P_M) ,\\
	\label{c1:a}
	P_{M-2}&=f(\lambda, P_M, P_{M-1}), \nonumber \\
	\vdotswithin{P_{1}}	\vspace{-1mm}  \nonumber\\ 	
	P_{1}&=f(\lambda, P_M, \ldots, P_3, P_2) .
	\vspace{-1mm} 
	\end{align}
	\vspace{-1mm} 
	By utilizing a method that is
	similar to the backward substitution approach \cite[App. C.2]{Boyd-Vandenberghe-04}, we can obtain a relationship between the power terms $P_m$ as follows: first, by substituting $2 \lambda \phi_M \psi(\eta_{M}) = P_M^3$ (see (\ref{eq,c1:21})) in \eqref{eq,c1:41} (or equivalently in \eqref{c1:b}) we evaluate $P_{M-1}$ as $P_{M-1} = \sqrt[3]{2 \phi_{M-1} \psi(\eta_{M-1}) P_M}$. Next, $P_{M-2}$ is evaluated by substituting  $\sqrt[3]{2 \lambda \phi_M \psi(\eta_{M})} = P_M$ and  $\sqrt[3]{2 \phi_{M-1} \psi(\eta_{M-1}) P_M} = P_{M-1}$ in \eqref{c1:a}. By continuing this procedure we can express the optimal transmit power in the $m^{th}$ round as
	\begin{align}
	\label{eq,c1:44}
	P_m=\sqrt{3 \phi_m \psi(\eta_{m}) P_{m+1}}, \; m\in\{1, \ldots, M-1\}.
	\end{align}
	Based on \eqref{eq,c1:44}, we can easily verify now that the obtained power values $P_m$ are all positive. Further, since $P_M$ is a function of $\lambda$ (see \eqref{eq,c1:21}) and using \eqref{eq,c1:44}, it is clear that each $P_m$ is a function of $\lambda$. Thus, all that remains is to compute the Lagrangian multiplier $\lambda$. For this purpose, we utilize the outage constraint in (\ref{eq:17}) \ref{c1:c4}. First, we substitute $P_m$ for $m =1,\ldots,M$ in \eqref{eq:16} to obtain $E_M$ as
	\vspace{-3mm}
	\begin{align}
	\label{eq,c1:29}
	E_m= \prod_{m=1}^M \frac{\phi_m \psi(\eta_{m})}{P_m^2}=\epsilon, 	\vspace{-1mm} 
	\end{align}
	\vspace{-1.5mm}
	where $P_m$ is given by
	\vspace{-2.5mm} 
	\begin{align}
	\label{eq,c1:32}
	\rho_m = 3^{o(m)} (2 \lambda)^{p(m)} \prod_{i=m}^{M} \left(\phi_i \psi(\eta_{i})\right)^{q(m)}.
	\end{align}
	In \eqref{eq,c1:32}, we compute the exponents: $o(m)=\sum_{i=1}^{M-m} \frac{1}{3^i}$, $p(m)=\frac{1}{3^{M-m+1}} \textrm{ and } q(m)= \frac{1}{3^{i-m+1}}$. Finally, we compute $\lambda$ by equating $E_M$ to the outage target $\epsilon$ based on \ref{c1:c4}, i.e.,
	\vspace{-1mm}
	\begin{align*}
	\label{eq,c1:39}
	\lambda \! \! = \! \! \!\left(\! \!\frac{\frac{1}{\epsilon} \prod_{m=1}^{M} \phi_m \psi(\eta_{m})}{\prod_{m=1}^{M} \! \! 2^{2 p(m)} 3^{2 o(m)} \!  \prod_{m=1}^{M} \left( \! \! \prod_{i=m}^{M} \! \left(\phi_i \psi(\eta_{i})\right)^{q(m)}\right)^2}\! \! \right)^{ \! \!k(m)} \! \! \! \! \! \! \!,
	\end{align*}
	where $k(m) = \sum_{m=1}^{M} \frac{-2}{3^{M-m+1}}$.
\end{appendices}
\bibliographystyle{IEEEtran}
\bibliography{IEEEabrv,Papers}
%
%
%
%
\end{document}